\theoremstyle{plain}
\theoremstyle{definition}
  \newtheorem{theorem}{Theorem}
  \newtheorem{claim}{Claim}
  \newtheorem{corollary}{Corollary}
  \newtheorem{proposition}{Proposition}
  \theoremstyle{remark}
\newcommand{\RR}{\mathbb{R}}
\newcommand{\textd}{\text{d}}
\newcommand{\bbr}{\mathbb{R}}
\newcommand{\bfone}{\mathbf{1}}
\newcommand{\supp}{\text{supp}}
\begin{document}

\thispagestyle{empty}
\title[]{Rental harmony with roommates$^{\dagger}$}
\author[]{Yaron Azrieli$^{*}$ and Eran Shmaya$^{**}$}
\thanks{$^{\dagger}$We thank Herv\'{e} Moulin for helpful discussions, and Chris Chambers and Rodrigo Velez for their comments on a previous draft of this paper.\\
$^{*}$Department of Economics, The Ohio State University. email: azrieli.2@osu.edu.\\
$^{**}$Kellogg School of Management, Northwestern University, and School of Mathematics, Tel Aviv University. email: e-shmaya@kellogg.northwestern.edu\\
Date: \today}
\maketitle

\singlespacing
\begin{abstract}
We prove existence of envy-free allocations in markets with heterogenous indivisible goods and money, when a given quantity is supplied from each of the goods and agents have unit demands. We depart from most of the previous literature by allowing agents' preferences over the goods to depend on the entire vector of prices. Our proof uses Shapley's K-K-M-S theorem and Hall's marriage lemma. We then show how our theorem may be applied in two related problems: Existence of envy-free allocations in a version of the cake-cutting problem, and existence of equilibrium in an exchange economy with indivisible goods and money.

\bigskip
\bigskip

\noindent Keywords: Envy-free; Assignment problem; Rental harmony; Cake cutting.

\bigskip
\bigskip

\noindent JEL Classification: D63
\end{abstract}


\onehalfspacing
\section{Introduction}
A central concept in the literature on economic fairness is \emph{envy-freeness}~\cite{fol-67,var-74,mou-03} -- an allocation is envy-free if no agent prefers the share allocated to another agent over his own share. In this note we study existence of envy-free allocations when the goods to be allocated are indivisible and heterogenous, and when in addition there is one perfectly divisible good (e.g., money). We assume that each agent has a demand for only one of the indivisible goods and that there is a given quantity supplied of each good.

While there are many real-life examples that can fit into this framework, we will use for concreteness the terminology of room-assignment and rent-division: Several rooms with different characteristics and given capacities are available in a house, and the total rent for the house needs to be divided between the rooms. In this context, envy-freeness boils down to a market clearing condition: A price is assigned to each room such that when each agent chooses his favorite room (given the prices) supply exactly equals demand and the market clears. Following Su~\cite{su-99}, we call such a situation \emph{rental harmony}. Note that even though we use the terms `rooms' and `capacities', we do not make the assumption that the agents to whom a given good is allocated, whom we call \emph{roommates}, receive a joint ownership of the same physical object. Rather, a room with capacity $7$ stands for an indivisible good of which $7$ units are supplied, and the roommates represent the $7$ agents who received these units. When we say that the price of the room is $p$ we mean that each unit costs $p/7$.

There is quite a vast literature dealing with different aspects of this model. Some of the earlier works include \cite{alk-91}, \cite{mas-87}, \cite{sve-83} and \cite{tad-91}. Where we depart from most of the previous works (with a couple of exceptions -- see below) is in the type of preferences that agents may have. Namely, it has been assumed in earlier works that each agent's preferences are defined over room-price pairs, i.e., if $r,r'$ are two rooms with prices $p$ and $p'$ respectively, then each agent can say whether he prefers to get room $r$ at price $p$ or room $r'$ at price $p'$. In our model an agent's favorite room may be a function of the \emph{entire vector of prices}. Thus, asking whether an agent prefers $(r,p)$ to $(r',p')$ is not meaningful in our context, since the answer may depend on the prices of other rooms.

There are several reasons why this is an important generalization. First, there may be `rational' reasons for agents' preferences over rooms to be affected by the entire vector of prices. This may be the case, for instance, if we view the choice of a room as only part of a larger `consumption plan'. For a concrete example, assume that a forward looking agent needs to choose between three types of cars, say High ($H$), Intermediate ($I$) and Low ($L$), with corresponding prices $p_H>p_I>p_L$. If $p_H$ is very high then an agent's preferred option may be to buy type $I$ and hold it for a long period of time. But if $p_H$ is reduced then the agent may prefer to buy $L$ initially (saving a larger part of his budget) and upgrade to $H$ later on when he has accumulated more wealth. Thus, his choice shifted from $I$ to $L$ even though the prices of these cars did not change.

Prices can also affect preferences if there is incomplete information about the quality of the rooms, in which case prices may serve as a signaling device. For instance, real-estate prices in two neighboring suburbs may provide information about their relative qualities. An increase (or decrease) in the price of houses in one of them may therefore affect the desirability of the other. Another reason for a similar effect is when agents take into account the fact that prices affect choices of other agents. In such an interactive situation there are plausible scenarios in which the entire vector of prices influences agents' optimal choices, for example if the price of a neighboring room indicates the identity of its future inhabitants.

Another reason to consider such general preferences is that framing effects and other well-documented `behavioral biases' may be affecting choices in ways that the standard model cannot capture. For example, assume that rooms $A$ and $B$ have similar characteristics while room $C$ is very different from the other two. Assume further that at a given price vector $p$ with $p_A=p_B$ the agent's preferred choice is room $C$. If the price of $A$ increases then room $B$ may become more attractive as it offers similar value as room $A$ for a `bargain' price. The agent may then choose $B$ instead of $C$, even though the prices of these rooms have not changed.

The two papers that are closest to ours are \cite{str-80} and \cite{su-99}. They allow for preferences as general as in our model, but in both these papers the supply of each good (room) is one, i.e., the number of agents is equal to the number of goods. The proofs in both these papers rely on this latter assumption. In another recently related paper Velez~\cite{vel-13} studies envy-free allocations in a general model with externalities. The existence result in that paper is based on the argument of \cite{su-99}. This paper also makes the assumption that the supply of each good is one.

Our contribution relative to these works is threefold. First, we allow for `roommates', i.e., the supply of each of the indivisible goods in the market may be greater than one, so that agents may be allocated different units of the same good. This extends the applicability of the result to many markets of interest. Notice that there is no straightforward way to reduce the problem into one in which the number of agents and rooms is the same. The reason is that, given the generality of preferences we allow over goods, there is no way to lift a preference over goods to a preference over units. 

Second, our proof introduces a new tool to this literature. The proof relies on a topological result of Shapley \cite{sha-73} known as the K-K-M-S Theorem. Roughly speaking, our proof works as follows. For each subset of rooms $T$ we consider the set of price vectors at which the demand for rooms in $T$ is sufficient to meet the capacity of these rooms. Our assumptions imply that for each $T$ this is a closed set, and that every collection of these sets corresponding to a balanced collection of subsets of rooms (see Appendix \ref{sec-kkms} for the definition) covers the simplex of all possible price vectors. It then follows from the K-K-M-S theorem that there exists a price vector in which the demand for every subset of rooms is sufficient. By Hall's marriage lemma it is then possible to assign rooms to agents to exactly clear the market. Previous papers~\cite{abd-04,dem-86} have used Hall's marriage lemma for this purpose, but to our knowledge the use of the K-K-M-S Theorem to establish the conditions required to apply the lemma is new, and as we show allows us to get a substantial generalization.

Third, we show equivalence between the rental harmony environment and two other problems: The cake division or chore division problem and a model of a discrete exchange economy with money introduced by Gale \cite{gal-84}. While these problem were studied in the past using similar mathematical tools, there seems to be no direct argument for equivalence between them in the previous literature. Thus, we achieve a generalization of the known results to these three problems and also establish the connection between them.

In the next section we state and prove our main result. In Section \ref{sec-variations} we show how our theorem may be applied in the two related environments of cake/chore division and a discrete exchange economy. Section \ref{sec-comments} concludes with some final remarks.

\section{Theorem and proof}
Let $N=\{1,2,\ldots,n\}$ be the set of \emph{agents} and let $R$ be the finite set of available \emph{rooms}. For each $r\in R$ let $c[r]>0$ be a positive integer representing the \emph{capacity} of room $r$. We assume that $\sum_{r\in R} c[r] =n$ (the case $\sum_{r\in R} c[r] >n$ trivially follows). Let
$$F=\left\{f:N\to R ~:~ \left|f^{-1}(r)\right|=c[r] ~ \forall r\in R \right\}$$
be the set of all assignments of agents to rooms that respect the capacity constraints.

The total rent for the house is normalized to 1, and we let
$$\Delta(R) = \left\{ \{p[r]\}_{r\in R} ~:~ \sum_{r\in R} p[r] =1,~ p[r]\ge 0 ~ \forall r\in R \right\}$$
be the set of possible ways to allocate the rent among the different rooms. We view $\Delta(R)$ as a subset of $\RR^R$ and work with the standard topology it inherits from that space. For $p\in \Delta(R)$ the support of $p$ is the set $\supp(p)=\{r\in R ~:~ p[r]>0\}$. If $T\subseteq R$ then $\Delta(T)= \{p\in \Delta(R) ~:~ \supp(p)\subseteq T \}$ is the face of $\Delta(R)$ corresponding to $T$.

Given a price vector $p$, each agent $i$ has a set $L_i(p)\subseteq R$ of rooms she likes most at these prices. We assume\\
(A1) For each $i$ and $p$, $L_i(p)\neq \emptyset$.\\
(A2) For each $i$ and $p$, $\supp(p)^c \subseteq L_i(p)$.\\
(A3) For each $i$, $L_i$ has a closed graph. That is, $\{p\in \Delta(R) ~:~ r\in L_i(p)\}$ is closed for every $r\in R$ and every $i\in N$.

Assumption (A1) requires that every agent likes at least one of the rooms given each price vector. (A2) says that all agents like free rooms. We elaborate on this assumption in Section
\ref{subsec-A2}. Finally, (A3) 
reflects continuity of preferences in prices.

\begin{theorem}\label{thm-rental-harmony}
Under assumptions (A1), (A2) and (A3) there exists $p^*\in \Delta(R)$ and an assignment $f^*\in F$ such that $f^*(i)\in L_i(p^*)$ for every $i\in N$.
\end{theorem}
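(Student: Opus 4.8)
The plan is to use Shapley's K-K-M-S theorem to produce a price vector at which demand is sufficiently spread across all subsets of rooms, and then invoke Hall's marriage lemma to extract an actual assignment.

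The plan is to recast the problem as a bipartite assignment problem and to produce, via Shapley's K-K-M-S theorem, a single price vector at which Hall's condition holds everywhere. For a price vector $p$ and a set of rooms $T\subseteq R$, write $d_T(p)=|\{i\in N:L_i(p)\cap T\neq\emptyset\}|$ for the number of agents who like at least one room in $T$, and $c(T)=\sum_{r\in T}c[r]$ for the total capacity of $T$. The first step is to observe that an assignment $f^*\in F$ with $f^*(i)\in L_i(p^*)$ exists precisely when the bipartite graph joining each agent $i$ to the rooms in $L_i(p^*)$, with room $r$ demanding $c[r]$ agents, admits a perfect assignment; since $\sum_{r}c[r]=n$, saturating the agents is equivalent to filling every room, and by the capacitated (defect) form of Hall's marriage lemma this is possible if and only if $d_T(p^*)\ge c(T)$ for every $T\subseteq R$. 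Thus it suffices to find one $p^*$ satisfying all of these inequalities simultaneously.

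The second step sets up the K-K-M-S machinery on $\Delta(R)$. For each nonempty $T\subseteq R$ define the demand-sufficiency set $C_T=\{p\in\Delta(R):d_T(p)\ge c(T)\}$. Assumption (A3) makes each $\{p:r\in L_i(p)\}$ closed, so $\{p:L_i(p)\cap T\neq\emptyset\}=\bigcup_{r\in T}\{p:r\in L_i(p)\}$ is closed, $d_T$ is upper semicontinuous, and hence each $C_T$ is closed. The crux is the covering condition: I would show that for every balanced collection $\mathcal{B}$ of subsets of $R$, with balancing weights $\lambda_T>0$ satisfying $\sum_{T\ni r}\lambda_T=1$ for all $r$, the sets $\{C_T\}_{T\in\mathcal{B}}$ cover $\Delta(R)$. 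The engine is the identity $\sum_{T\in\mathcal{B}}\lambda_T c(T)=\sum_{r\in R}c[r]\sum_{T\ni r}\lambda_T=n$, together with the bound $\sum_{T\in\mathcal{B}}\lambda_T d_T(p)\ge n$, which holds because each agent $i$ may select some room $r\in L_i(p)$, nonempty by (A1), and so contributes total weight at least $\sum_{T\ni r}\lambda_T=1$. As the $\lambda_T$ are nonnegative and sum-weighted demand dominates sum-weighted capacity, at least one $T\in\mathcal{B}$ must satisfy $d_T(p)\ge c(T)$, i.e. $p\in C_T$.

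With closedness and the covering condition established, the third step invokes the K-K-M-S theorem to produce a price vector $p^*$ at which the demand for every subset of rooms is sufficient, that is $d_T(p^*)\ge c(T)$ for all $T\subseteq R$. The fourth and final step simply returns to the Hall characterization of the first step: this system of inequalities is exactly Hall's condition for the bipartite likes-graph at $p^*$, so Hall's marriage lemma yields the desired assignment $f^*\in F$ with $f^*(i)\in L_i(p^*)$ for every $i$.

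I expect the covering step to be the main obstacle, and the reason is the division of labour between the two remaining assumptions. Demand sufficiency in the interior is a global weighting phenomenon driven entirely by (A1), but the K-K-M-S theorem constrains how the sets must interact with the face structure of the simplex, and it is precisely (A2) that controls the boundary: on a proper face of $\Delta(R)$ the unpriced rooms are liked by every agent, which is what forces the relevant sub-collections to cover each face and rules out the degenerate situation in which all demand escapes to free rooms. Getting the exact form of the covering hypothesis to match the statement of K-K-M-S, so that (A1) and (A2) together deliver it, is where I anticipate the real work lies; the semicontinuity argument for closedness and the final application of Hall's lemma should be routine by comparison.
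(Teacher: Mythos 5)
Your overall strategy is the same as the paper's: closed demand-sufficiency sets $C_T$, a covering property for balanced collections obtained from the weighting identity, then K-K-M-S followed by Hall's marriage lemma. Your closedness argument and your balanced-covering argument (which, as you note, uses only (A1)) are exactly the paper's two claims. The gap is in your third step. No version of K-K-M-S delivers a point $p^*$ with $d_T(p^*)\ge c(T)$ for \emph{all} $T\subseteq R$. The form matching your hypotheses (closed sets whose union over every balanced collection is all of $\Delta(R)$) yields only a $p^*$ with $p^*\in C_T$ for every $T\subseteq\supp(p^*)$; the classical form yields a single balanced collection $\mathcal{B}$ with $\bigcap_{T\in\mathcal{B}}C_T\neq\emptyset$, and that $\mathcal{B}$ could be the trivial partition $\{R\}$, which gives nothing. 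Either way you are left without Hall's condition for the sets $T$ containing a room priced at zero, and your write-up never closes that hole.

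That is precisely where (A2) must act, and it acts \emph{after} K-K-M-S rather than inside the covering condition, which is where your final paragraph anticipates it. The covering of $\Delta(R)$ by $\{C_T\}_{T\in\mathcal{B}}$ for balanced $\mathcal{B}$ needs only (A1) and holds on the boundary as well. The missing step is: at the point $p^*$ produced by K-K-M-S, if $T\not\subseteq\supp(p^*)$ then $T$ contains a free room, (A2) says every agent likes that room, hence $d_T(p^*)=n\ge c(T)$; combined with $p^*\in C_T$ for all $T\subseteq\supp(p^*)$ this gives Hall's condition for every $T\subseteq R$ and the proof closes. The fix is short, but as written the argument overstates the conclusion of K-K-M-S and leaves (A2) unused.
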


\begin{proof}
For every $p\in \Delta(R)$ and every $T\subseteq R$ let
$$A_T(p) = \left\{i\in N ~:~ L_i(p)\cap T \neq \emptyset\right\}$$
be the set of agents who like one of the rooms in $T$ at prices $p$. Also, for $T\subseteq R$ define
$$K_T = \left\{p\in \Delta(R) ~:~ |A_T(p)| \ge \sum_{r\in T} c[r]\right\}$$
to be the set of price vectors at which the demand for rooms in $T$ is sufficient to meet the capacity of these rooms.

\begin{claim}\label{claim-closed}
Each $K_T$ is closed in $\Delta(R)$.
\end{claim}
\begin{proof}
Note that
\[K_T=\bigcup_{B,g}\bigcap_{i\in B} \{p\in\Delta(R):g(i)\in L_i(p)\},\]
where the union ranges over all pairs $(B,g)$ such that $B$ is a set of agents with $|B|\ge \sum_{r\in T} c[r]$ and $g:B\rightarrow T$ is an assignment of rooms in $T$ to the agents in $B$.

The sets $\{p\in\Delta(R):g(i)\in L_i(p)\}$ are closed for every $i$ and $g$ by (A3). Therefore, $K_T$ is closed as a finite union of intersections of closed sets.
\end{proof}

\begin{claim}\label{claim-balanced}
If $\mathcal{T}$ is a balanced collection of subsets of $R$ (see Appendix \ref{sec-kkms} for the definition) then $\bigcup_{T\in \mathcal{T}} K_T = \Delta(R)$
\end{claim}
\begin{proof}
Let $\mathcal{T}$ be a balanced collection and let $\{\lambda_T\}_{T\in \mathcal{T}}$ be non-negative coefficients satisfying $\sum_{T\in\mathcal{T}} \lambda_T \bfone_T = \bfone_R$. Taking scalar product with arbitrary $u\in\bbr^R$ we have that
\begin{equation}\label{eqn-balanced}
\sum_{T\in\mathcal{T}} \lambda_T \sum_{r\in T} u[r] = \sum_{r\in R} u[r].
\end{equation}

Fix some $p\in \Delta(R)$ and let $g:N\to R$ be a choice of optimal rooms for the players at prices $p$ (here we use (A1)), so that in particular $g^{-1}(T) \subseteq A_T(p)$ for every $T\in \mathcal{T}$. Then
\begin{eqnarray*}
\sum_{T\in\mathcal{T}} \lambda_T |A_T(p)| \ge \sum_{T\in\mathcal{T}} \lambda_T |g^{-1}(T)| &=& \sum_{T\in\mathcal{T}} \lambda_T \sum_{r\in T} |g^{-1}(r)| =\\
& & \sum_{r\in R} |g^{-1}(r)| = n = \sum_{r\in R} c[r] = \sum_{T\in\mathcal{T}} \lambda_T \sum_{r\in T} c[r],
\end{eqnarray*}
where the second and last equalities follow from (\ref{eqn-balanced}) with $u[r]=|g^{-1}(r)|$ and $u[r]=c[r]$, respectively. It follows that there is $T\in \mathcal{T}$ such that $|A_T(p)| \ge \sum_{r\in T} c[r]$, so that $p\in K_T$.
\end{proof}

It follows from Claims \ref{claim-closed} and \ref{claim-balanced} that the collection of sets $\{K_T\}_{T\subseteq R}$ satisfies the conditions of Corollary \ref{cor-KKMS} in Appendix \ref{sec-kkms}. Thus, there exists $p^*\in \Delta(R)$ such that $p^*\in \bigcap_{T\subseteq \supp(p^*)} K_T$.

Now, consider a bipartite graph with sets of vertices $N$ and $R$, where a node $i\in N$ is connected to a node $r\in R$ if $r\in L_i(p^*)$. If $T\subseteq \supp(p^*)$ then $|A_T(p^*)| \ge \sum_{r\in T} c[r]$ since $p^*\in K_T$, and if $T\nsubseteq \supp(p^*)$ then $|A_T(p^*)|=n \ge \sum_{r\in T} c[r]$ since all the players like free rooms. It follows that the graph satisfies the condition of Hall's Marriage Theorem (See Theorem \ref{thm-marriage}), so there is a subgraph in which each agent is connected to at most one of the rooms in $R$ and each room in $R$ is exactly full. Since $\sum_{r\in R} c[r] =n$ each agent is connected to exactly one room. This defines the required assignment $f^*$.

\end{proof}


\section{Variations of the problem}\label{sec-variations}

 \subsection{Cake cutting and chore division}
A closely related problem to the one we consider is the problem of allocating pieces of a cake to a group of agents in a way that every agent is happy with the piece he got. There are several formulations of this problem, starting with the classic works \cite{dub-61} and \cite{ste-49}. In the version closest to our model (see, e.g., \cite[Section 3]{su-99}) the cake has a rectangular shape and one can only use $n-1$ vertical cuts to partition the cake into $n$ pieces ($n$ is the number of agents). Each possible cake-cut corresponds then to a point in the $n-1$ dimensional simplex. If players have their favorite piece(s) given  any cake-cut, and one wants to find a cut in which each player likes a different piece, then this fits to the special case of our model in which $n=|R|$ and $c[r]=1$ for each $r\in R$.

A similar problem is that of chore division, in which a set of undesirable entities (`chores') is to be allocated to a group of agents. Each chore comes with a  monetary compensation attached to it as well as the number of agents that should be performing it. One is interested in finding compensations for the various chores such that when each agent chooses a favorite chore there are enough agents performing each chore. One example of this situation would be the allocation of administrative tasks to faculty members in an academic department.

What is common to both these problems, and different from the rental harmony problem we considered, is that higher amounts of the divisible good are desired by the agents. In the rental harmony problem we interpreted the transfers as rent that an agent pays for his room, and we assumed in (A2) that agents like free rooms. On the other hand, in the cake cutting problem $p[r]=0$ means that the $r$th piece is empty, and so a hungry agent would not want to get it. Similarly, in the chore division problem a chore without compensation is unlikely to be the favorite of any agent.

Consider the following alternative to (A2), which requires that agents \emph{never} like an empty piece of cake (or a chore with no compensation):

\medskip
\noindent (A2$^*$)  For each $i$ and $p$, $L_i(p) \subseteq \supp(p)$.
\medskip

\begin{proposition}\label{pro:cake}
Under assumptions (A1), (A2$^*$) and (A3) there exists $p^*\in \Delta(R)$ and an assignment $f^*\in F$ such that $f^*(i)\in L_i(p^*)$ for every $i\in N$.
\end{proposition}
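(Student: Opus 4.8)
The plan is to mirror the proof of Theorem~\ref{thm-rental-harmony} almost verbatim, but to run the K-K-M-S machinery on the \emph{complementary} family of sets. The difficulty created by (A2$^*$) is exactly the mirror image of the role played by (A2) before: under (A2) a free room is liked by everybody, so the sets $K_T$ with $T\nsubseteq\supp(p)$ were automatically satisfied and the assignment produced by Hall's lemma could harmlessly place agents in free rooms; under (A2$^*$) a free room is liked by \emph{nobody}, so such an assignment is forbidden and I must instead guarantee that the price vector I produce has full support. The main obstacle is therefore to force $\supp(p^*)=R$, and the device I would use for this is to index the closed sets by complements.

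Concretely, I would keep $A_T(p)$ and $K_T=\{p\in\Delta(R):|A_T(p)|\ge\sum_{r\in T}c[r]\}$ exactly as in the proof above, and apply Corollary~\ref{cor-KKMS} to the family $\{K_{R\setminus T}\}_{T\subseteq R}$. Closedness of each $K_{R\setminus T}$ is immediate from Claim~\ref{claim-closed}. For the balanced-covering hypothesis I would repeat the computation of Claim~\ref{claim-balanced}: given a balanced collection $\calt$ with weights $\lambda_T$ and an optimal choice $g:N\to R$ at $p$ (using (A1)), summing $|g^{-1}(R\setminus T)|=n-|g^{-1}(T)|$ against the $\lambda_T$ and invoking~(\ref{eqn-balanced}) once with $u[r]=|g^{-1}(r)|$ and once with $u[r]=c[r]$ yields
\[\sum_{T\in\calt}\lambda_T\,|A_{R\setminus T}(p)|\;\ge\;n\Big(\sum_{T\in\calt}\lambda_T-1\Big)\;=\;\sum_{T\in\calt}\lambda_T\sum_{r\in R\setminus T}c[r],\]
so some $T\in\calt$ must satisfy $p\in K_{R\setminus T}$; hence $\bigcup_{T\in\calt}K_{R\setminus T}=\Delta(R)$.

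Corollary~\ref{cor-KKMS} would then produce $p^*$ with $p^*\in K_{R\setminus T}$ for every $T\subseteq\supp(p^*)$. Taking $T=\supp(p^*)$ gives $|A_{R\setminus\supp(p^*)}(p^*)|\ge\sum_{r\in R\setminus\supp(p^*)}c[r]$; but by (A2$^*$) no agent likes a room outside $\supp(p^*)$, so $A_{R\setminus\supp(p^*)}(p^*)=\emptyset$ and the inequality forces $R\setminus\supp(p^*)=\emptyset$, i.e.\ $\supp(p^*)=R$. This is the step I expect to be the crux and the only genuinely new point relative to Theorem~\ref{thm-rental-harmony}. Once full support is established, the relation $p^*\in K_{R\setminus T}$ for all $T\subseteq R$ says precisely that $|A_S(p^*)|\ge\sum_{r\in S}c[r]$ for every $S\subseteq R$, which is the hypothesis of Hall's Marriage Theorem (Theorem~\ref{thm-marriage}) for the bipartite graph connecting $i$ to $r$ whenever $r\in L_i(p^*)$; the resulting assignment $f^*\in F$ satisfies $f^*(i)\in L_i(p^*)$, as required.
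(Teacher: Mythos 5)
Your proof is correct, and it takes a genuinely different route from the paper's. The paper proves Proposition~\ref{pro:cake} by a geometric reduction to Theorem~\ref{thm-rental-harmony}: it shrinks $\Delta(R)$ onto a smaller affine copy of itself via the map sending each vertex to the barycenter of the opposite face, extends the given preferences to the whole simplex so that outside (and on the boundary of) the small copy only ``cheap'' rooms ($p[r]\le 1/|R|$) are liked, checks that the extended preferences satisfy (A1)--(A3), invokes Theorem~\ref{thm-rental-harmony} as a black box, and then argues that the resulting price vector must lie in the interior of the small copy. You instead reopen the proof of Theorem~\ref{thm-rental-harmony} and rerun the K-K-M-S argument on the complemented family $\{K_{R\setminus T}\}_{T\subseteq R}$; your verification of the balanced-covering hypothesis (subtracting the identity of Claim~\ref{claim-balanced} from $n\sum_T\lambda_T$ on both sides) is valid, the application of Corollary~\ref{cor-KKMS} is legitimate since the corollary only requires a closed family indexed by the subsets of $R$ satisfying the balanced-covering condition, and the step where $T=\supp(p^*)$ together with (A2$^*$) and $c[r]>0$ forces $\supp(p^*)=R$ is exactly the right replacement for the paper's ``no agent likes a room on the face $\varphi(F_r)$'' argument. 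After that, Hall's condition holds for every $S\subseteq R$ and Theorem~\ref{thm-marriage} finishes as before. What each approach buys: the paper's transformation treats Theorem~\ref{thm-rental-harmony} as a black box and is the same device it reuses for the Gale exchange economy (where only a homeomorphism of the price space is available, so a reduction of this kind is unavoidable); your argument avoids the somewhat fiddly three-case definition of the modified preferences and the verification of (A1)--(A3) for them, and it makes the source of the full-support conclusion completely transparent, at the cost of not being a reduction to the already-proved theorem.
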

\begin{proof}
Consider preferences $L_i^\ast$ over $\Delta(R)$ that satisfy (A1), (A2$^*$) and (A3). We transform these preferences to preferences $L_i$ over $\Delta(R)$ that satisfy (A1), (A2) and (A3) in the following way: For each room $r\in R$ let $v_r$ be the vertex of $\Delta(R)$ corresponding to that room, and let $F_r=\Delta(R\setminus\{r\})$ be the face of $\Delta(R)$ opposite to $v_r$. 
Denote by $w_r$ the barycenter of $F_r$, that is $w_r=\frac{1}{|R|-1}\sum_{s\in R\setminus\{r\}} v_s$.
Let $\varphi:\Delta(R)\rightarrow\Delta(R)$ be the unique affine embedding such that $\varphi(v_r)=w_r$. Then $\varphi$ maps $\Delta(R)$ onto a smaller copy of this simplex, which lies inside $\Delta(R)$. In particular, $\varphi$ maps the boundary of $\Delta(R)$ onto the boundary of $\varphi(\Delta(R))$, and the interior of $\Delta(R)$ onto the interior of $\varphi(\Delta(R))$. 

Define $L_i$ by
\[L_i(p)=\begin{cases}L^\ast_i (\varphi^{-1}(p)), &\text{ if }p\in \text{interior}(\varphi(\Delta(R))),\\
L^\ast_i(\varphi^{-1}(p))\cup \{r\in R ~:~ p[r]\le 1/|R|\}, &\text{ if }p\in \text{boundary}(\varphi(\Delta(R))),\\
\{r\in R ~:~p[r]\le 1/|R|\}, &\text{ otherwise}.\end{cases}\]
In words: If $p$ is in the interior of the image of $\varphi$ (the interior of the small copy of the simplex) then the favorite rooms under $L_i$ are the same as those under $L_i^\ast$ when prices are $\varphi^{-1}(p)$; if $p$ is not in the image of $\varphi$ then only the relatively cheap rooms are the favorites; if $p$ is on the boundary of the image of $\varphi$ then both rooms that are favorite under $L^\ast_i$ when prices are $\varphi^{-1}(p)$ and the relatively cheap rooms are preferred.


It is straightforward to verify that $L_i$ satisfies assumptions (A1), (A2) and (A3). It follows from Theorem \ref{thm-rental-harmony} that there exists an envy-free allocation for these preferences. Let $\bar p$ be the price vector associated with this allocation. We claim that $\bar p \in \text{interior}(\varphi(\Delta(R)))$. To see why notice first that $\bar p$ must be in the image of $\varphi$, since outside the image only relatively cheap rooms are liked, and there is always at least one room with price greater than $1/|R|$ that no agent would choose. Second, assume by contradiction that $\bar p$ is on the boundary of the image of $\varphi$. Then $\bar p\in \varphi (F_r)$ for some room $r$. But then no agent likes room $r$ at prices $\bar p$ since by (A2$^\ast$) $r\notin L^\ast_i(\varphi^{-1}(\bar p))$ and since $\bar p[r]=1/(|R|-1)>1/|R|$.

To conclude, $\bar p \in \text{interior}(\varphi(\Delta(R)))$ and therefore $L_i(\bar p) = L^\ast_i (\varphi^{-1}(\bar p))$ for each agent $i$. It follows that there is an envy-free allocation for preferences $L^\ast_i$ with prices $\varphi^{-1}(\bar p)$.
\end{proof}

\subsection{Equilibrium in a discrete exchange economy}
Our result can be used to prove existence of equilibrium in an exchange economy with indivisible goods and money, as in the model studied by Gale \cite{gal-84}.\footnote{We thank Rodrigo Velez for pointing us to this Gale paper.} While Gale assumes that supply of each of the indivisible goods is 1, we allow for arbitrary quantities. The essential difference between the rental harmony problem we consider and Gale's exchange economy is that the prices of the rooms need not sum up to 1. Instead, it is only assumed that the price of each room is non-negative and bounded above (by 1, without loss). Thus,
$$C(R) = [0,1]^R$$
is the set of possible ways to price the different rooms. To stay consistent with our previous terminology we keep calling the indivisible goods `rooms', even though the interpretation of the model is somewhat different now. Agents' preferences are still represented by the sets $L_i(p)\subseteq R$, so that $r\in L_i(p)$ means that agent $i$ likes room $r$ at prices $p$. We keep assumptions (A1) and (A3) unchanged, but we replace the monotonicity assumption (A2) with the following arguably more compelling assumption:

\medskip
\noindent (A2') For each $i$ and $p$, $L_i(p)\subseteq \{r\in R ~:~ p[r]<1\}$.
\medskip

Thus, instead of assuming that one of the rooms with price $0$ will be chosen we assume that a room with price $1$ (the maximal possible price) will not be chosen. 
We also weaken (A1) to allow for the possibility that no room is desirable when the prices of all rooms is $1$. As should be clear from the proof we could weaken (A1) further.

\medskip

\noindent (A1') For each $i$ and $p$ such that $p[r]\neq 1$ for some $r\in R$, $L_i(p)\neq\emptyset$.
\medskip

\begin{proposition}
Under assumptions (A1'), (A2') and (A3) there exists $p^*\in C(R)$ and an assignment $f^*\in F$ such that $f^*(i)\in L_i(p^*)$ for every $i\in N$.
\end{proposition}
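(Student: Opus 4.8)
\medskip

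The strategy is to reduce the exchange-economy problem to the rental-harmony Theorem~\ref{thm-rental-harmony} by an affine change of the price domain, exactly paralleling the transformation used in Proposition~\ref{pro:cake}. The domain $C(R)=[0,1]^R$ is not a simplex, but it naturally embeds into $\Delta(R')$ where $R'=R\cup\{0\}$ adds one artificial ``outside'' room indexed by $0$ with capacity $c[0]=0$. The plan is to map a cube price vector $p\in C(R)$ to the simplex point $q\in\Delta(R')$ given by $q[r]=p[r]/(|R|+1)$ for $r\in R$ and $q[0]=1-\sum_{r\in R}q[r]$; this is an affine embedding onto a sub-simplex sitting inside $\Delta(R')$, and a room $r\in R$ has cube-price $p[r]=1$ precisely when the corresponding face condition $q[r]=1/(|R|+1)$ holds at the extreme. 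The artificial room $0$ absorbs the slack so that the prices on $R$ need no longer sum to~$1$.

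\medskip

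First I would define preferences $\tilde L_i$ on $\Delta(R')$ that satisfy (A1), (A2) and (A3) and that, on the relevant interior region, agree with the given $L_i$ pulled back through the embedding. Concretely, inside the image of the embedding set $\tilde L_i(q)=L_i(p)$ where $p$ is the preimage; outside and on the boundary, augment with the ``cheap'' rooms (including room $0$) so that (A2) -- every agent likes some free/cheapest room -- holds, just as in the cake-cutting construction. The key design goal is to arrange the cheap-room augmentation and the placement of room $0$ so that room $0$, whose capacity is $0$, is never actually needed in an assignment, while the augmented preferences genuinely satisfy (A1) (nonemptiness everywhere on $\Delta(R')$), which is where I use the weakened (A1$'$): at a cube point where some $p[r]\neq 1$ the original $L_i(p)$ is nonempty, and at the all-ones point or outside the image the artificial cheap rooms supply a liked room.

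\medskip

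Next, invoke Theorem~\ref{thm-rental-harmony} for the capacity system $\{c[r]\}_{r\in R}\cup\{c[0]=0\}$ on $R'$ with $\sum_{r\in R'}c[r]=n$, obtaining an equilibrium price $q^*\in\Delta(R')$ and assignment $f^*$. The decisive step -- and the main obstacle -- is to show that $q^*$ lands in the interior of the embedded image, so that $\tilde L_i(q^*)=L_i(p^*)$ for the genuine preimage $p^*\in C(R)$, and that no agent is assigned to room $0$. This mirrors the boundary-exclusion argument of Proposition~\ref{pro:cake}: I must verify that if $q^*$ were outside the image or on its boundary, then the augmented preferences would force the equilibrium to select a room whose high cube-price is excluded by (A2$'$), yielding a contradiction. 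The care needed is that (A2$'$) only forbids rooms at the \emph{maximal} price $1$, not a continuum of expensive rooms as in (A2$^*$), so the geometric argument isolating the bad face must be matched precisely to the single face $\{p[r]=1\}$ of the cube; getting this correspondence between cube faces and simplex faces exactly right, together with confirming room $0$ is unused because $c[0]=0$, is the crux. Once $q^*$ is shown to be interior with $f^*(i)\in R$ for all $i$, the preimage $p^*\in C(R)$ with the same $f^*$ furnishes the required equilibrium, since $f^*(i)\in\tilde L_i(q^*)=L_i(p^*)$.
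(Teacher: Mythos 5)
Your proposal is a plan rather than a proof: the step you yourself flag as ``the decisive step and the main obstacle'' --- showing that the price vector produced by Theorem~\ref{thm-rental-harmony} lies in the interior of the embedded cube, so that $\tilde L_i(q^*)$ reduces to $L_i(p^*)$ --- is precisely the part you do not carry out, and for the construction you describe it actually fails. The obstruction is not the face $\{p[r]=1\}$ you worry about but the faces $\{p[r]=0\}$. Under your embedding $q[r]=p[r]/(|R|+1)$, a real room $r$ is free in $\Delta(R')$ (i.e.\ $q[r]=0$) exactly when $p[r]=0$, and such points are perfectly legitimate --- indeed expected --- equilibrium prices of the exchange economy. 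Assumption (A2) of Theorem~\ref{thm-rental-harmony} forces you to put $r$ into $\tilde L_i(q)$ at every such point, even though (A2$'$) gives no guarantee that $r\in L_i(p)$ there; so if the theorem returns a $q^*$ with $q^*[r]=0$ and assigns some agent to $r$ only because of this forced augmentation, the desired conclusion $f^*(i)\in L_i(p^*)$ is false. The boundary-exclusion argument of Proposition~\ref{pro:cake} cannot rescue this: there the equilibrium is pushed off the boundary because (A2$^*$) says nobody likes a free room, whereas here (A2$'$) says nothing at zero prices, and one cannot in general avoid having some free room at equilibrium. Two smaller problems: Theorem~\ref{thm-rental-harmony} assumes $c[r]>0$, so admitting $c[0]=0$ requires at least re-inspecting its proof; and the affine image of the cube $[0,1]^R$ in $\Delta(R')$ is a parallelepiped, not a sub-simplex, so the interior/boundary bookkeeping of Proposition~\ref{pro:cake} does not transfer verbatim.

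For comparison, the paper sidesteps all of this by not embedding the full cube. It restricts attention to $B(R)=\{p\in C(R) : p[r]=0 \text{ for some } r\}$, which is homeomorphic to $\Delta(R)$ via a map $\varphi$ (taken from Gale's paper) that sends the cube faces $\{p[r]=1\}$ to the simplex faces $\{q[r]=0\}$. Under this identification (A2$'$) becomes exactly (A2$^*$) and (A1$'$) becomes (A1) (since every $p\in B(R)$ has a coordinate different from $1$), so Proposition~\ref{pro:cake} applies directly, with no auxiliary room, no capacity-zero issue, and no boundary-exclusion argument. If you want to salvage your route, you would need an embedding that keeps the genuine zero-price points of $C(R)$ away from the set where (A2) bites, which is essentially what the restriction to $B(R)$ accomplishes.
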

\begin{proof}
Let $B(R)=\{p\in C(R) ~:~ p[r]=0\text{ for some } r\}$ and let $\varphi:B(R)\leftrightarrow \Delta(R)$ be a homeomorphism with the property that $\varphi(p)[r]=0$ whenever $p[r]=1$. Such a homeomorphism is constructed in Gale's proof of his theorem. Then $\varphi$ transforms preferences over $B(R)$ that satisfy (A1'), (A2') and (A3) into preferences over $\Delta(R)$ that satisfy (A1), (A2$^\ast$) and (A3). By Proposition~\ref{pro:cake} these preferences admits an envy-free allocation.\end{proof}

%

\section{Final comments}\label{sec-comments}

\subsection{On assumption (A2)}\label{subsec-A2}
Assumption (A2) is probably the most restrictive of our conditions. 
Because we did not assume that agents' preference are monotonic in the prices, (A2) is the only assumption that captures the intuition that agents are tightfisted: Free rooms are always at least as good as non-free rooms

It is possible to relax (A2) somewhat without affecting the result. Consider the following assumption:

\medskip
(A2$^\circ$) If $\supp(p)\neq R$ then
$\supp(p)^c \cap L_i(p) \neq \emptyset$ for every agent $i$.
\medskip

This weaker version requires that if there are free rooms then every agent likes at least one of them. Our result holds unchanged if (A2) is replaced by (A2$^\circ$). The reason is that, given (A3), (A2$^\circ$) implies (A2). To see why, fix some $p$ with $|\supp(p)^c|\ge 2$ (if $|\supp(p)^c|\le 1$ then there is nothing to prove), and let $\bar r\in \supp(p)^c$. For every $\alpha\in (0,1)$ consider the price vector $p_\alpha$ defined by $p_\alpha[\bar r]= p[\bar r]=0$ and $p_\alpha[r] = \alpha p[r] +(1-\alpha)\frac{1}{|R|-1}$ for each $r\neq \bar r$. Then $\bar r$ is the only free room at every $p_\alpha$, so by (A2$^\circ$) every agent likes $\bar r$. But $p_\alpha \to p$ as $\alpha \to 1$, so by (A3) every agent likes $\bar r$ at prices $p$ as well.

Still, even this weaker version rules out many standard preferences. In particular, any quasi-linear preferences in which two rooms have different values do not satisfy this assumption, and indeed envy-free allocations need not exist with such preferences (see, for example, \cite[Section 6]{abd-04}). However, starting from any preferences $L_i$ satisfying (A1) and (A3) (in particular, quasi-linear preferences), it is possible to obtain modified preferences that satisfy all three assumptions by altering $L_i$ only on the boundary of the simplex. Specifically, the correspondence $\tilde{L}_i(p)= L_i(p) \cup \supp(p)^c$ satisfies (A1), (A2) and (A3) whenever $L_i$ satisfies (A1) and (A3).

\subsection{Efficiency}
Some of the previous papers on fair allocations have studied the relationship between envy-freeness and efficiency. In our model, however, it is not clear what efficiency means. The reason is that preferences of agents are defined over the indivisible goods (rooms) \emph{conditional} on the vector of prices. Thus, one cannot compare allocations across different price vectors.

An alternative approach, which allows to consider efficiency, would be to start from preferences over pairs $(p,r)$ where $p$ is the vector of prices and $r$ is the room assigned to the agent.\footnote{This is essentially the approach taken in \cite{vel-13}. In \cite{vel-13} preferences are defined over the entire allocation (including how rooms are assigned to other agents), but the {impersonality} axiom implies that agents only care about the vector of prices and the room assigned to them.} From such preferences one can derive the preferences over rooms conditional on prices. However, such across-prices comparisons are not relevant for the question of existence of envy-free allocations, which is the focus of this note. We therefore preferred to simplify the exposition and notation by using the conditional preferences as primitive.

\subsection{Manipulation}
While we proved existence of envy-free allocations, we did not study whether such allocations can be implemented when agents' preferences are their private information. This aspect of the problem has been analyzed under the type of preferences allowed in the previous literature -- see for example \cite{vel-11} and the references therein. It would be interesting to see which of the results obtained in that literature apply in our set-up as well.

\subsection{Constructing a solution}
Our proof is not constructive, as the K-K-M-S theorem guarantees existence of the desired price vector $p^*$ without showing how to find it. However, one could construct algorithms that approximate $p^*$ up to an arbitrary level of precision. For instance, Shapley's original proof of the K-K-M-S theorem relies on a combinatorial result in the spirit of Sperner's lemma, which can be used approximate $p^*$. Once $p^*$ is found it is easy to construct the envy-free assignment $f^*$.

\appendix

\section{K-K-M-S Theorem}\label{sec-kkms}
A collection $\mathcal{T}$ of subsets of $R$ is called \emph{balanced} if there are non-negative coefficients $\{\lambda_T\}_{T\in \mathcal{T}}$ such that
$$\sum_{T\in\mathcal{T}} \lambda_T \bfone_T = \bfone_R.$$

The following result by Zhou \cite{zho-94-2} is a variant of Shapley's \cite{sha-73} `K-K-M-S theorem', the only difference being that the covering sets are open rather than closed. See also \cite{bil-70,her-97,ich-88,kom-94,sha-91,zho-94-1,zho-94-2} for alternative proofs of the K-K-M-S theorem and related results, as well as for applications of this result in the theory of cooperative games.
\begin{theorem}\label{open-KKMS}\cite{zho-94-2}
Let $\{L_T\}_{T\subseteq R}$ be a collection of open subsets of $\Delta(R)$ with the property that $\Delta(S) \subseteq \bigcup_{T\subseteq S} L_T$ for every $S\subseteq R$. Then there exists a balanced collection $\mathcal{T}$ such that $\bigcap_{T\in \mathcal{T}} L_T \neq \emptyset$.
\end{theorem}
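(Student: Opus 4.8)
The plan is to derive the open K-K-M-S theorem from Brouwer's theorem (in its degree-theoretic form) by converting the combinatorial balancedness condition into a geometric statement about barycenters, and then building a continuous self-map of $\Delta(R)$ whose face-preserving property encodes exactly the covering hypothesis. The first step is a reformulation of balancedness. For nonempty $T\subseteq R$ write $b_T=\frac{1}{|T|}\bfone_T$ for the barycenter of the face $\Delta(T)$, and let $b_R$ be the barycenter of $\Delta(R)$. A direct computation shows that a collection $\mathcal T$ is balanced precisely when $b_R\in\conv\{b_T:T\in\mathcal T\}$: given $b_R=\sum_{T\in\mathcal T}\mu_T b_T$ with $\mu_T\ge 0$ and $\sum_T\mu_T=1$, the coefficients $\lambda_T=|R|\mu_T/|T|$ satisfy $\sum_T\lambda_T\bfone_T=\bfone_R$. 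Thus it suffices to produce one point $x^*$ together with weights, supported on sets $T$ with $x^*\in L_T$, expressing $b_R$ as a convex combination of the corresponding $b_T$.

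Next I would build a partition of unity adapted to the face structure. For each nonempty $T$ set $U_T=L_T\cap\{x\in\Delta(R):x[r]>0\text{ for all }r\in T\}$, an open subset of $\Delta(R)$ (the sets $L_\emptyset$, if present, may be discarded). The covering hypothesis is exactly what makes $\{U_T\}$ cover $\Delta(R)$: applying it with $S=\supp(x)$ yields a nonempty $T\subseteq\supp(x)$ with $x\in L_T$, and then $x[r]>0$ for every $r\in T$, so $x\in U_T$. Let $\{\phi_T\}$ be a partition of unity subordinate to this finite open cover of the compact space $\Delta(R)$, so that $\phi_T\ge 0$, $\sum_T\phi_T\equiv 1$, and $\phi_T(x)>0$ implies $x\in U_T$. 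Define the continuous map
\[g:\Delta(R)\to\Delta(R),\qquad g(x)=\sum_T\phi_T(x)\,b_T.\]
Its key feature is that $g$ carries every face into itself: if $x\in\Delta(S)$ and $\phi_T(x)>0$, then $x\in U_T$ forces $T\subseteq S$, hence $b_T\in\Delta(S)$, so $g(x)\in\Delta(S)$ by convexity.

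Finally I would show $b_R\in g(\Delta(R))$ by a Brouwer degree argument and read off the conclusion. Consider the straight-line homotopy $H(x,t)=(1-t)x+t\,g(x)$. Since both $x$ and $g(x)$ lie in $\Delta(\supp(x))$, the homotopy preserves faces, so $H(x,t)\in\partial\Delta(R)$ whenever $x\in\partial\Delta(R)$; as $b_R$ is interior, it is never attained on the boundary along the homotopy. Homotopy invariance of the Brouwer degree then gives $\deg(g,\Delta(R),b_R)=\deg(\mathrm{id},\Delta(R),b_R)=1\neq 0$, so there is $x^*$ with $g(x^*)=b_R$. Setting $\mathcal T=\{T:\phi_T(x^*)>0\}$, the identity $b_R=\sum_{T\in\mathcal T}\phi_T(x^*)\,b_T$ shows $\mathcal T$ is balanced, while $\phi_T(x^*)>0$ gives $x^*\in U_T\subseteq L_T$; hence $x^*\in\bigcap_{T\in\mathcal T}L_T$, as required.

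I expect the main obstacle to be the surjectivity step, namely turning ``face-preserving self-map'' into ``hits the barycenter.'' The degree/homotopy argument dispatches it cleanly, but one must check carefully that the homotopy keeps the boundary on the boundary, which is exactly where the face-respecting support of the partition of unity, and thus the covering hypothesis, is genuinely used; a secondary point needing care is the verification that $\{U_T\}$ really is a cover and the harmless exclusion of $T=\emptyset$.
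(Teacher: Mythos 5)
Your proof is correct, but note that the paper itself contains no proof of Theorem \ref{open-KKMS}: the result is quoted from Zhou \cite{zho-94-2}, so the comparison is really with that cited source. Your argument is in essence a self-contained reconstruction of the standard proof of Zhou's theorem: the barycentric reformulation of balancedness ($\mathcal{T}$ is balanced iff $b_R\in\conv\{b_T: T\in\mathcal{T}\}$, with the coefficient conversion $\lambda_T=|R|\mu_T/|T|$ you compute), a partition of unity subordinate to the refined cover $U_T=L_T\cap\{x: x[r]>0 \ \forall r\in T\}$, the face-preserving barycentric map $g(x)=\sum_T\phi_T(x)b_T$, and a surjectivity step showing $b_R\in g(\Delta(R))$. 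Where Zhou closes with Brouwer's fixed point theorem (a face-preserving continuous self-map of the simplex is onto, proved via a retraction trick), you invoke homotopy invariance of Brouwer degree along $H(x,t)=(1-t)x+tg(x)$; the two devices are interchangeable, and your check that the homotopy never hits $b_R$ on the boundary is exactly where face-preservation, hence the covering hypothesis, enters. All the individual verifications (that $\{U_T\}$ covers, that $\phi_T(x^*)>0$ forces $x^*\in U_T\subseteq L_T$, and that the weights at $x^*$ witness balancedness of $\mathcal{T}=\{T:\phi_T(x^*)>0\}$) are sound.

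One point deserves more than your parenthetical dismissal: the theorem as literally stated indexes the cover by all $T\subseteq R$, including $T=\emptyset$, and under that literal reading it is false. Take $|R|=1$, so $\Delta(R)$ is a single point $v$; set $L_\emptyset=\{v\}$ and $L_R=\emptyset$. The covering hypothesis holds, yet every balanced collection must contain $R$ with coefficient $1$ (and $\{\emptyset\}$ is not balanced, since $\lambda_\emptyset\bfone_\emptyset=0\neq\bfone_R$), so no balanced collection has nonempty intersection. Thus ``discarding $L_\emptyset$'' is not harmless housekeeping but the adoption of the necessary convention (as in Zhou) that only nonempty $T$ appear, and your covering verification --- ``yields a nonempty $T\subseteq\supp(x)$'' --- silently relies on it. In the paper's application the issue is moot, since there $K_\emptyset=\Delta(R)$, whence $L_\emptyset=K_\emptyset^c=\emptyset$. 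With that convention made explicit, your proof stands as written.
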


\begin{corollary}\label{cor-KKMS}
Let $\{K_T\}_{T\subseteq R}$ be a collection of closed subsets of $\Delta(R)$ such that $\bigcup_{T\in \mathcal{T}} K_T = \Delta(R)$ whenever $\mathcal{T}$ is a balanced collection. Then there is $p^*\in\Delta(R)$ such that $p^*\in \bigcap_{T\subseteq \supp(p^*)} K_T$.
\end{corollary}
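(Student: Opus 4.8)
The plan is to derive the corollary from the open K-K-M-S theorem (Theorem \ref{open-KKMS}) by a contradiction argument carried out on the complements of the $K_T$. Suppose no such $p^*$ exists. Then for every $p\in\Delta(R)$ there is some $T\subseteq\supp(p)$ with $p\notin K_T$. Writing $U_T=\Delta(R)\setminus K_T$, which is open because $K_T$ is closed, this says precisely that for each $p$ we may select an index $T_p\subseteq\supp(p)$ with $p\in U_{T_p}$.

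The key construction is to turn these pointwise selections into an open family satisfying the hypothesis of Theorem \ref{open-KKMS}. For each $p$ I would choose an open neighborhood $V_p\ni p$ in $\Delta(R)$ with $V_p\subseteq U_{T_p}$ (possible since $U_{T_p}$ is open), and then set $L_T=\bigcup_{p:\,T_p=T}V_p$ for each $T\subseteq R$ (with $L_T=\emptyset$ when no $p$ is assigned to $T$). Each $L_T$ is open as a union of open sets, and by construction $L_T\subseteq U_T$, i.e. $L_T\cap K_T=\emptyset$. The covering condition $\Delta(S)\subseteq\bigcup_{T\subseteq S}L_T$ then follows directly: given $q\in\Delta(S)$ we have $\supp(q)\subseteq S$, hence $T_q\subseteq\supp(q)\subseteq S$, while $q\in V_q\subseteq L_{T_q}$, so $q$ indeed lies in $\bigcup_{T\subseteq S}L_T$.

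With the hypotheses of Theorem \ref{open-KKMS} verified, it yields a balanced collection $\mathcal{T}$ with $\bigcap_{T\in\mathcal{T}}L_T\neq\emptyset$. Any point $q^*$ in this intersection satisfies $q^*\in L_T\subseteq U_T$ for every $T\in\mathcal{T}$, hence $q^*\notin K_T$ for all such $T$, i.e. $q^*\notin\bigcup_{T\in\mathcal{T}}K_T$. This contradicts the standing hypothesis that $\bigcup_{T\in\mathcal{T}}K_T=\Delta(R)$ for every balanced $\mathcal{T}$, which completes the proof by contradiction.

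I expect the only real subtlety to be the bookkeeping in the covering step, namely the observation that the index $T_q$ of the set capturing $q$ is automatically a subset of any $S$ with $q\in\Delta(S)$, which is exactly the ``$T\subseteq S$'' form of the covering hypothesis in Theorem \ref{open-KKMS}. Everything else is a routine translation between the closed-covering conclusion sought here and the open-covering statement supplied by the theorem.
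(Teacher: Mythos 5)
Your argument is correct and is essentially the paper's: both reduce the statement to Theorem \ref{open-KKMS} applied to the complements of the $K_T$, the only difference being that you run it as a proof by contradiction while the paper invokes the contrapositive directly. The selection of indices $T_p$ and neighborhoods $V_p$ is harmless but unnecessary --- under your contradiction hypothesis the full complements $L_T=\Delta(R)\setminus K_T$ already satisfy the covering condition $\Delta(S)\subseteq\bigcup_{T\subseteq S}L_T$, which is exactly how the paper proceeds.
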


\begin{proof}
For each $T$ define $L_T=K_T^c$. Then each $L_T$ is open and $\bigcap_{T\in \mathcal{T}} L_T = \emptyset$ for every balanced collection $\mathcal{T}$. By Theorem \ref{open-KKMS} there is $S\subseteq R$ and $p^*\in \Delta(S)$ such that $p^*\not\in L_T$ for every $T\subseteq S$. Thus, $p^*\in \bigcap_{T\subseteq S} K_T \subseteq \bigcap_{T\subseteq \supp(p^*)} K_T$.
\end{proof}



\section{Marriage Theorem with polygamy}\label{sec-marriage}
\begin{theorem}\label{thm-marriage}~\cite[Corollary 3.11]{bol-98}
Let $G$ be a bipartite graph with vertex sets $X$ and $Y$, and let $c:Y\to \mathbb{N}$. Then $G$ contains a subgraph $H$ such that $\textd_H(y)=c[y]$ for every $y\in Y$ and $\textd_H(x)\in \{0,1\}$ for every $x\in X$ if and only if for every $S\subseteq Y$
\[\textd_G(S)\geq \sum_{y\in S}c[y].\]
\end{theorem}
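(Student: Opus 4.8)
The plan is to deduce this capacitated marriage theorem from the classical (unit-capacity) Hall theorem by a vertex-splitting reduction; throughout I write $N_G(S)$ for the set of vertices in $X$ adjacent to some element of $S\subseteq Y$, so that $\textd_G(S)=|N_G(S)|$. First I would dispose of the easy ``only if'' direction. Suppose the subgraph $H$ exists and fix $S\subseteq Y$. Each $y\in S$ is incident in $H$ to exactly $c[y]$ edges, and since every $x\in X$ has $\textd_H(x)\le 1$, all $\sum_{y\in S}c[y]$ of these edges have distinct endpoints in $X$. These endpoints lie in $N_G(S)$, so $\textd_G(S)\ge \sum_{y\in S}c[y]$.

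For the substantive ``if'' direction I would build an auxiliary bipartite graph $G'$ with parts $X$ and $Y'$, where $Y'$ is obtained by replacing each $y\in Y$ with $c[y]$ identical copies $y^{(1)},\dots,y^{(c[y])}$, each joined in $G'$ to precisely the $G$-neighbors of $y$. The key observation is that subgraphs $H$ of the desired form correspond bijectively to matchings of $G'$ that saturate $Y'$: demanding $\textd_H(y)=c[y]$ with pairwise distinct endpoints (forced by $\textd_H(x)\le 1$) is exactly the choice, for each copy of $y$, of a distinct partner in $X$ used by no other copy of any vertex. Thus it suffices to produce a matching of $G'$ saturating $Y'$, and for this I would invoke the classical Hall theorem.

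The remaining step is to check that the hypothesis $\textd_G(S)\ge\sum_{y\in S}c[y]$ for all $S\subseteq Y$ implies Hall's condition $|N_{G'}(S')|\ge |S'|$ for all $S'\subseteq Y'$. Given $S'\subseteq Y'$, let $S\subseteq Y$ be the set of original vertices having at least one copy in $S'$. Since copies share neighborhoods, $N_{G'}(S')=N_G(S)$, while $|S'|\le\sum_{y\in S}c[y]$. Hence $|N_{G'}(S')|=\textd_G(S)\ge \sum_{y\in S}c[y]\ge |S'|$, so Hall's theorem supplies the saturating matching, which translates back to the required $H$.

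I do not expect a genuine obstacle here: the entire content is the translation between capacitated degree constraints and ordinary matchings. The one point deserving care is precisely the reduction of Hall's condition on $G'$ to the stated condition on $G$ --- namely that among all copy-sets $S'$ representing a fixed family $S\subseteq Y$ the binding case is taking all copies, which is exactly what the inequality $|S'|\le\sum_{y\in S}c[y]$ records.
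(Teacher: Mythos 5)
Your proof is correct. The paper itself gives no proof of this statement --- it simply cites it as Corollary 3.11 of Bollob\'{a}s --- and your vertex-splitting reduction to the classical Hall theorem (replace each $y\in Y$ by $c[y]$ copies sharing $y$'s neighborhood, check Hall's condition for the blown-up graph via $|S'|\le\sum_{y\in S}c[y]$ and $N_{G'}(S')=N_G(S)$, then translate the saturating matching back) is the standard argument and essentially the one in the cited reference. Both directions are handled correctly, so there is nothing to add.
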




\begin{thebibliography}{99}


\bibitem{abd-04} Abdulkadiroglu, A., T. S\"{o}nmez and M. U. \"{U}nver, ``Room assignment-rent division: A market approach,'' \emph{Social Choice and Welfare} 22 (2004), 515-538.


\bibitem{alk-91} Alkan, A., G. Demange and D. Gale, ``Fair allocation of indivisible goods and criteria for justice,'' \emph{Econometrica} 59 (1991), 1023-1039.


\bibitem{bil-70} Billera, L. J., ``Some theorems on the core of an $n$-person game without side payments,'' \emph{Siam Journal of Applied Mathematics} 18 (1970), 567-579.


\bibitem{bol-98} Bollob{\'a}s, B., \emph{Modern graph theory} (New York: Springer-Verlag, 1998).


\bibitem{dem-86} Demange, G., D. Gale and M. Sotomayor, ``Multi-item auctions,'' \emph{The Journal of Political Economy} 94 (1986), 863-872.


\bibitem{dub-61} Dubins, L. E. and E. H. Spanier, ``How to cut a cake fairly,'' \emph{The American Mathematical Monthly} 68 (1961), 1-17.


\bibitem{fol-67} Foley, D., ``Resource allocation and the public sector,'' \emph{Yale Economic Essays} 7 (1967), 45-98.


\bibitem{gal-84} Gale, D., ``Equilibrium in a discrete exchange economy with money,'' \emph{International Journal of Game Theory} 13, 61-64.


\bibitem{her-97} Herings, P. J. J., ``An extremely simple proof of the K-K-M-S theorem,'' \emph{Economic Theory} 10 (1997), 361-367.


\bibitem{ich-88} Ichiishi, T., ``Alternative version of Shapley's theorem on closed coverings of a simplex,'' \emph{Proceedings of the American Mathematical Society} 104 (1988), 759-763.


\bibitem{kom-94} Komiya, H., ``A simple proof of the K-K-M-S theorem,'' \emph{Economic Theory} 4 (1994), 463-466.


\bibitem{mas-87} Maskin, E., ``On the fair allocation of indivisible goods,'' in Feiwel, G., Ed. \emph{Arrow and the foundations of the theory of economic policy} (London: Macmillan, 1987), 341-349.


\bibitem{mou-03} Moulin, H., \emph{Fair division and collective welfare} (Cambridge: MIT press, 2003).


\bibitem{sha-73} Shapley, L. S., ``On balanced games without side payments,'' in Hu, T. C. and S. M. Robinson, Eds. \emph{Mathematical Programming} (New York: Academic Press, 1973), 261-290.


\bibitem{sha-91} Shapley, L. S. and R. Vohra, ``On Kakutani's fixed point theorem, the K-K-M-S theorem and the core of a balanced game,'' \emph{Economic Theory} 1 (1991), 108-116.


\bibitem{ste-49} Steinhaus, H., ``Sur las division pragmatique,'' \emph{Econometrica} 17 (1949), 315-319.


\bibitem{str-80} Stromquist, W., ``How to cut a cake fairly,'' \emph{The American Mathematical Monthly} 87 (1980), 640-644.


\bibitem{su-99} Su, F. E., ``Rental harmony: Sperner's lemma in fair division,'' \emph{The American Mathematical Monthly} 106 (1999), 930-942.


\bibitem{sve-83} Svensson, L., ``Large indivisibilities: An analysis with respect to price equilibrium and fairness,'' \emph{Econometrica} 51 (1983), 939-954.


\bibitem{tad-91} Tadenuma, K. and W. Thomson, ``No-envy and consistency in economies with indivisible goods,'' \emph{Econometrica} 59 (1991), 1755-1767.


\bibitem{var-74} Varian, H., ``Equity, envy, and efficiency,'' \emph{Journal of economic Theory} 9 (1974), 63-91.


\bibitem{vel-11} Velez, R. A., ``Are incentives against economic justice?,'' \emph{Journal of Economic Theory} 146 (2011), 326-345.


\bibitem{vel-13} Velez, R. A., ``Fairness and externalities,'' mimeo, Texas A\&M University, 2013.


\bibitem{zho-94-1} Zhou, L., ``A new bargaining set of an N-person game and endogenous coalition formation,'' \emph{Games and Economic Behavior} 6 (1994), 512-526.


\bibitem{zho-94-2} Zhou, L., ``A theorem on open coverings of a simplex and Scarf's core existence theorem through Brouwer's fixed point theorem,'' \emph{Economic Theory} 4 (1994), 473-477.


\end{thebibliography}
\end{document}